\newtheorem{remark}{Remark}
\begin{document}
	\copyrightyear{2017}
	\setcopyright{acmcopyright}
	\acmConference{SocialSens'2017}{Apr. 21 2017}{Pittsburgh, PA, USA} 
	\acmISBN{978-1-4503-4977-2/17/04}
	\acmPrice{15.00}
	\acmDOI{{http://dx.doi.org/10.1145/3055601.3055607}}
	
\title[Does Confidence Reporting from the Crowd Benefit Crowdsourcing]{Does Confidence Reporting from the Crowd Benefit Crowdsourcing Performance?}

\author{Qunwei Li}
\affiliation{%
  \institution{Department of EECS, Syracuse University}
  \state{NY} 
  \postcode{13210}
}
\email{qli33@syr.edu}

\author{Pramod K. Varshney}
\affiliation{%
	\institution{Department of EECS, Syracuse University}
	\state{NY} 
	\postcode{13210}
}
\email{varshney@syr.edu}

\renewcommand{\shortauthors}{Qunwei Li, et al.}

\begin{abstract}
We explore the design of an effective crowdsourcing system for an $M$-ary classification task. Crowd workers
complete simple binary microtasks whose results are aggregated to give the final classification decision. We consider
the scenario where the workers have a reject option so that they are allowed to skip microtasks when they are unable to or choose not to respond to binary microtasks. Additionally, the workers report quantized confidence levels when they are able to submit definitive answers. We present an aggregation approach using a weighted majority voting rule, where each worker's response is assigned an optimized weight to maximize crowd's classification performance. We obtain a couterintuitive result that the classification performance does not benefit from workers reporting quantized confidence. Therefore, the crowdsourcing system designer should employ the reject option without requiring confidence reporting.
\end{abstract}

%
%
 \begin{CCSXML}
	<ccs2012>
	<concept>
	<concept_id>10003120.10003130.10003134.10003293</concept_id>
	<concept_desc>Human-centered computing~Social network analysis</concept_desc>
	<concept_significance>500</concept_significance>
	</concept>
	</ccs2012>
\end{CCSXML}

\ccsdesc[500]{Human-centered computing~Social network analysis}


\keywords{Classification, crowdsourcing, distributed inference, information fusion, reject option, confidence reporting}

\maketitle

\section{Introduction}
{Crowdsourcing}  provides a new framework to utilize distributed human wisdom to solve problems that machines cannot perform well, like handwriting recognition, paraphrase acquisition, audio transcription, and photo tagging \cite{paritosh2011computer,burrows2013paraphrase,7052378}. Despite the successful applications of crowdsourcing, the relatively low quality of output is a key challenge \cite{IpeirotisPW2010,allahbakhsh2013quality,mo2013cross}.

Several methods have been proposed to deal with the aforementioned problems \cite{KargerOS2011b,VempatyVV2014,yue2014weighted,6891807,VarshneyVV2014,QuinnB2011,zhang2012reputation,hirth2013analyzing}. A crowdsourcing task is decomposed into microtasks that are easy for an individual to accomplish, and these microtasks could be as simple as binary distinctions \cite{KargerOS2011b}. A classification problem with crowdsourcing, where taxonomy and dichotomous keys are used to design binary questions, is considered in \cite{VempatyVV2014}. In our research group, we employed binary questions and studied the use of error-control codes and decoding algorithms to design crowdsourcing systems for reliable classification \cite{VarshneyVV2014,VempatyVV2014}.  A group control mechanism where the reputation of the workers is taken into consideration to partition the crowd into groups is presented in\cite{QuinnB2011,zhang2012reputation}. Group control and majority voting are compared in \cite{hirth2013analyzing}, which reports that majority voting is more cost-effective on less complex tasks.

In past work on classification via crowdsourcing, crowd workers were required to provide a definitive yes/no response to binary microtasks. Crowd workers may be unable to answer questions for a variety of reasons such as lack of expertise. As an example, in mismatched speech transcription, i.e., transcription by workers who do not know the language, workers may not be able to perceive the phonological dimensions they are tasked to differentiate \cite{jyothi2015acquiring}. In recent work, we have investigated the design of the optimal aggregation rule when the workers have a reject option so that they are unable to or choose not to respond \cite{7747496}.

The possibility of using confidence scores to improve the quality of crowdsourced labels was investigated in \cite{kazai2011search}. An aggregation method using confidence scores to integrate labels provided by crowdsourcing workers was developed in \cite{oyama2013accurate}. A payment mechanism was proposed for crowdsourcing systems with a reject option and confidence score reporting \cite{shah2014double}. Indeed, confidence reporting can be useful for estimating the quality of the provided responses and possibly yield better outcomes when the aggregation is not optimal. However, potential crowdsourcing performance improvement with an optimal aggregation rule resulting from confidence reporting has not yet been investigated. As is studied in this paper, when an optimal aggregation rule is developed, confidence reporting does not help to improve the performance. 

In this paper, we further consider the problem investigated in \cite{7747496} by studying the scenario when the workers include their confidence levels in their responses.
The main contribution of this paper is the counterintuitive finding that the confidence scores of the crowd do not play a role in the optimal aggregation rule. The weight assignment scheme to ensure the maximum weight for the correct class is the same as that when there is no confidence reporting. Although confidence reporting can provide useful information for estimating the quality of the crowd, the noise introduced due to categorization of confidence makes the estimation less accurate. Since the estimation result is essential for aggregation, confidence reporting may cause performance degradation.

\section{Crowdsourcing Task with a Reject Option}
Consider the situation where $W$ workers take part in an $M$-ary object classification task. Each worker is asked $N$ simple binary questions, termed as microtasks, and the worker's answer to a single microtask is conventionally represented by either ``1'' (Yes) or ``0'' (No), which eventually lead to a classification decision among the $M$ classes. We assume independent microtask design and, therefore, we have $N = \left\lceil {{{\log }_2}M} \right\rceil $ independent microtasks of equal difficulty. The workers submit responses that are combined to give the final decision. Here, we consider the microtasks to be simple binary questions and the worker's answer to a single microtask is conventionally represented by either ``1'' (Yes) or ``0'' (No) \cite{VempatyVV2014,rocker2007paper}. Thus, the $w$th worker's ordered answers to all the microtasks form an $N$-bit word, which is denoted by ${\bf a}_w$. Let ${\bf a}_w(i)$, $i\in \{ 1,2,\dots,N\}$ represent the $i$th bit in this vector.  

In our previous work \cite{7747496}, we considered a general problem setting where the worker has a reject option of skipping the microtasks. We denote this skipped answer as $\lambda$, whereas the ``1/0'' (Yes/No) answers are termed as definitive answers. Due to the variability of different worker backgrounds, the probability of submitting definitive answers is different for different workers. Let $p_{w,i}$ represent the probability of the $w$th worker submitting $\lambda$ for the $i$th microtask. Similarly, let $\rho_{w,i}$ be the probability that ${\bf a}_w(i)$, the $i$th answer of the $w$th worker, is correct given that a definitive answer has been submitted. Due to the variabilities and anonymity of workers, we study crowdsourcing performance when $p_{w,i}$ and $\rho_{w,i}$ are realizations of certain probability distributions, which are denoted by distributions ${F_P }\left( p \right)$ and ${F_\rho }\left( \rho \right)$ respectively. The corresponding means are expressed as $m$ and $\mu$.

Let $H_0$ and $H_1$ denote the hypotheses where ``0'' or ``1'' is the true answer for a single microtask, respectively. For simplicity of performance analysis, $H_0$ and $H_1$ are assumed equiprobable for every microtask. The crowdsourcing task manager or a fusion center (FC) collects the $N$-bit words from $W$ workers and performs fusion based on an aggregation rule. 
We focus on finding the optimal aggregation rule and let us briefly review the results regarding the aggregation of responses from the workers for classification in our previous work \cite{7747496}.
\\
\textbullet\ Let $D=\{e_j, j=1,2,\dots,M\}$ be the set of all the object classes, where $e_j$ represents the $j$th class. Based on $w$th worker's response to the microtasks, a subset $D_w$ is chosen, within which the classes are associated with weight $W_w$ for aggregation.\footnote{If all the responses from the $w$th worker are definitive, $D_w$ is a singleton. Otherwise, $D_w$ contains multiple classes.}
	The fusion center FC adds up the weights for every class and chooses the one with highest overall weight as the final decision $e_D$, which can be expressed as
	\begin{align}
	e_D= \arg \mathop {\max }\limits_{{{e_j} \in {D}} }\left\{  \sum\limits_{w= 1}^W {{W_w}I_{{D_w}}\left\langle {{e_j}} \right\rangle } \right\}, j=1,2,\dots ,M,
	\end{align}
	where $I_{{D_w}}\left\langle {{e_j}} \right\rangle$ is an indicator function which equals 1 if $e_j\in D_w$ and 0 otherwise. To derive the optimal weight $W_w$ for each worker, one may look into the minimization of the misclassification probability, for which a closed-form expression cannot be derived without an explicit expression for $W_w$. Hence, it is difficult to determine the optimal weight.
\\	
	\textbullet\  The $M$-ary classification task can also be split into $N$ binary hypothesis testing problems, by associating a classification decision with an $N$-bit word. Each worker votes ``1'' or ``0'' with the weight $W_w$ for every bit. In this case, the Chair-Varshney rule gives the optimal weight as $W_w={\log \frac{{{\rho _{w,i}}}}{{1 - {\rho _{w,i}}}}}$ \cite{ChairV1986}. However, this requires the prior knowledge on $\rho_{w,i}$ for every worker, which is not available in practice. 
\\	
	\textbullet\  We proposed a novel weighted majority voting method, which was derived by solving the following optimization problem
	\begin{equation}\label{problem}
	\begin{array}{l}
	\text{maximize}\ \ {E_C}\left[ {{\mathbb{W}}} \right]\\
	\text{subject to}\ \ {E_O}\left[ {{\mathbb{W}}} \right] = {K}
	\end{array}
	\end{equation}
	where ${E_C}\left[ {{\mathbb{W}}} \right]$ denotes the crowd's average weight contribution to the correct class and ${E_O}\left[ {{\mathbb{W}}} \right]$ denotes the average weight contribution to all the possible classes that is constrained to remain a constant $K$. Statistically, this method ensures maximum weight to the correct class and consequently maximum probability of correct classification. We showed that this method significantly outperforms the simple majority voting procedure.

In this paper, we investigate the impact of confidence reporting from the crowd on system performance. The weight assignment scheme is developed by solving problem \eqref{problem} as well.

\section{Crowdcouring with Confidence Reporting}
We consider the case where the crowd is composed of honest workers, which means that the workers honestly observe, think, and answer the questions, give confidence levels, and skip questions that they are not confident about. We derive the optimal weight assignment for the workers and the performance of the system in a closed form. Based on these findings, we determine the potential benefits of confidence reporting in a crowdsourcing system with a reject option.

\subsection{Confidence Level Reporting}
In a crowdsourcing system where workers submit answers and report confidence, 
we define the $w$th worker's confidence about the answer to the $i$th microtask as the probability of this answer being correct given that this worker gives a definitive answer, which is equal to $\rho_{w,i}$ as defined earlier. When $\rho_{w,i}$ is bounded as $\frac{ l_{w,i}-1}{L}\le \rho_{w,i} \le \frac{ l_{w,i}}{L}$, $l_{w,i}\in \{1,\ldots,L\}$, the $w$th worker reports his/her confidence level as $l_{w,i}$. Let $l_{w,i}$ be drawn from the distribution $l_{w,i} \sim F_L(l)$. Note that every worker independently gives confidence levels for different microtasks, and $L=1$ simply means that workers submit answers and do not report their confidence levels. 

Assuming that a worker can accurately perceive the probability $\rho_{w,i}$ and honestly report the confidence level, intuitively it is expected that it will benefit the crowdsourcing fusion center as much more information about the quality of the crowd can be extracted. However, as the confidence is quantized, which helps the workers in determining the confidence levels to be reported, quantization noise is introduced in extracting the crowd quality from confidence reporting.

As an illustrative example, consider the problem of mismatched crowdsourcing for speech transcription, which has garnered interest in the signal processing community \cite{Hasegawa-JohnsonCJV2015,jyothi2015acquiring,VarshneyJH2016,LiuJTMSKHK2016,ChenHC2016,KongJH2016}.
Suppose the four possibilities for a velar stop consonant to transcribe are $R=\{${\fransdvng k}, {\fransdvng K}, {\fransdvng g}, {\fransdvng G}$\}$. The simple binary question of ``whether it is aspirated or unaspirated'' differentiates between $\{${\fransdvng K}, {\fransdvng G}$\}$ and $\{${\fransdvng k}, {\fransdvng g}$\}$, whereas the binary question of ``whether it is voice or unvoiced'' differentiates between $\{${\fransdvng g}, {\fransdvng G}$\}$ and $\{${\fransdvng k}, {\fransdvng K} $\}$. The highest confidence level is set as $L=4$. Now suppose the first worker is a native Italian speaker.  Since Italian does not use aspiration, this worker will be unable to differentiate between $\{${\fransdvng k}$\}$ and $\{${\fransdvng K}$\}$, or between $\{${\fransdvng g}$\}$ and $\{${\fransdvng G}$\}$.  It would be of benefit if this worker would specify the inability to perform the task through a special symbol $\lambda$, rather than guessing randomly, and this worker answers ``Yes'' with confidence level 1 to the second question. Suppose the second worker is a native Bengali speaker. Since this language makes a four-way distinction among velar stops, such a worker will probably answer both questions without a $\lambda$.  

In the rest of this section, we address the problem ``Does the confidence reporting help crowdsourcing system performance?'' by performing analyses when workers report their confidences with their definitive answers. 
\subsection{Optimal Weight Assignment Scheme}
We determine the optimal weight $W_w$ for the $w$th worker in this section. We rewrite hereby the weight assignment problem 
\begin{equation}\label{max}
\begin{array}{l}
\text{maximize}\ \ {E_C}\left[ {{\mathbb{W}}} \right]\\
\text{subject to}\ \ {E_O}\left[ {{\mathbb{W}}} \right] = {K}
\end{array}
\end{equation}
where ${E_C}\left[ {{\mathbb{W}}} \right]$ denotes the crowd's average weight contribution to the correct class and ${E_O}\left[ {{\mathbb{W}}} \right]$ denotes the average weight contribution to all the possible classes and remains a constant $K$. Statistically, we are looking for the weight assignment scheme such that the weight contribution to the correct class is maximized while the weight contribution to all the classes remains fixed, so as to maximize the probability of correct classification.
\begin{proposition}\label{pro1}
	To maximize the average weight assigned to the correct classification element, the weight for $w$th worker's answer is given by
	\begin{align}
	W_w={\mu}^{-n},
	\end{align}
	where $n$ is the number of definitive answers that the $w$th worker submits.
\end{proposition}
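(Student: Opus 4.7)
The plan is to compute $E_C[\mathbb{W}]$ and $E_O[\mathbb{W}]$ in closed form, show that the reported confidence levels collapse out of these averages via the tower property, and then apply the Lagrangian argument of \cite{7747496} to extract $W_w = \mu^{-n}$.

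First, I would fix a worker $w$ who submits $n$ definitive answers with reported confidences $l_{w,1},\ldots,l_{w,n}$ and $N-n$ skips. Since the skipped bits are unconstrained, $|D_w| = 2^{N-n}$, so this worker's contribution to $E_O[\mathbb{W}]$ is exactly $W_w \cdot 2^{N-n}$. The correct class lies in $D_w$ if and only if all $n$ definitive answers agree with the true labels; using the independence of the microtask responses, the conditional probability equals $\prod_{i=1}^{n} E[\rho \mid l_{w,i}]$.

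Next, averaging over the distribution $F_L$ of reported confidences and invoking the tower property $E_l\!\left[E[\rho \mid l]\right] = E[\rho] = \mu$, the unconditional probability reduces to $\mu^{n}$. The reported $l_{w,i}$'s thus drop out of $E_C$ after integration against $F_L$, leaving $E_C[\mathbb{W}] = \sum_{w=1}^{W} E_{n_w}[W_w\,\mu^{n_w}]$ and $E_O[\mathbb{W}] = \sum_{w=1}^{W} E_{n_w}[W_w\,2^{N-n_w}]$. At this stage the optimization problem is formally identical to the no-confidence setting of \cite{7747496}.

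Finally, forming $\mathcal{L} = E_C - \nu(E_O - K)$ and following the derivation of \cite{7747496} yields the stationarity condition that equalizes the expected per-worker contribution to the correct class, i.e.\ $W_w \cdot \mu^{n}$ is constant in $n$. Solving and absorbing the multiplicative constant into the normalization $K$ gives $W_w = \mu^{-n}$. The main obstacle is justifying that no refinement of $W_w$ using the detailed $l_{w,i}$'s can strictly improve $E_C$ under the same $E_O$ budget; here the tower identity is again decisive, since any such refinement averages out against the quantization distribution and therefore cannot raise the value of $E_C$ beyond what the $n$-only rule already achieves. This is precisely the counterintuitive phenomenon the paper highlights.
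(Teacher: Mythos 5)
Your reduction step is in the spirit of the paper: the appendix likewise ends up evaluating $E_{\rho,l}\bigl[\prod_{i\in N_n}\rho_{w,i}\,\big|\,l_{w,i}\bigr]=\mu^{n}$, i.e.\ the reported confidences are averaged out exactly as in your tower-property argument, and your bookkeeping of the $2^{N-n}$ factor in $E_O[\mathbb{W}]$ and the final normalization match the paper. However, the step where you actually extract $W_w=\mu^{-n}$ has a genuine gap. Both $E_C[\mathbb{W}]$ and $E_O[\mathbb{W}]$ are \emph{linear} in the weights, so the Lagrangian $\mathcal{L}=E_C-\nu(E_O-K)$ has derivative $P_\lambda(n)\bigl(\mu^{n}-\nu\,2^{N-n}\bigr)$ with respect to $W_w(n)$, which does not involve $W_w$ at all; stationarity therefore cannot yield the condition ``$W_w\mu^{n}$ constant in $n$'' (it can only hold for at most one $n$, and the naive linear program would push all weight onto $n=N$). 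The paper does not use a Lagrangian: it derives the equalization condition from the \emph{equality cases} of two Cauchy--Schwarz bounds (first over $n$ with weights $P_\lambda(n)$, then over the distribution of $p_{w,i}$), which force $E_{\rho,l}\bigl[W_w\prod_{i\in N_n}\rho_{w,i}\,\big|\,l_{w,i}\bigr]=\beta$ for a constant $\beta$, and then restricts to weights that are deterministic given $n$ (because the distribution of $\prod_i\rho_{w,i}$ is unknown), which is precisely where $W_w=\beta\mu^{-n}$ and the confidence-independence come from.

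Your closing claim that the tower identity shows no confidence-dependent refinement of $W_w$ can improve $E_C$ at fixed $E_O$ is also not sound as stated: by the same linearity, a weight that up-weights realizations with high reported confidence (where $E[\rho\mid l]$ exceeds $\mu$) and down-weights low-confidence ones keeps the linear budget $E_O$ fixed while raising the linear objective $E_C$, so ``averaging out'' alone does not rule out improvement. In the paper the confidence-independence is not a free-standing optimality statement of that kind; it is a consequence of the Cauchy--Schwarz equality conditions together with the deterministic-weight restriction. To repair your argument you would need to either reproduce that Cauchy--Schwarz equalization (which is presumably also the content of the cited derivation in \cite{7747496}, not a Lagrangian stationarity condition) or supply some other principled argument for why the per-$n$ (and per-confidence) weighted contributions must be equalized.
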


\begin{proof}	
	See Appendix.
\end{proof}
\begin{remark}
	Here the weight depends on the number of questions answered by a worker. In fact, if more questions are answered, the weight assigned to the corresponding worker's answer is larger. This is intuitively pleasing as a high-quality worker is able to answer more questions and is assigned a higher weight. Increased weight can put more emphasis on the contribution of high-quality workers in that sense and improve overall classification performance.
\end{remark}

\begin{remark}
	When $L=\infty$, $\rho_{w,i}$ associated with every worker for every microtask is reported exactly. Then the Chair-Varshney rule gives the optimal weight assignment to minimize error probability \cite{ChairV1986}. However, human decision makers are limited in their information processing capacity and can only carry around
	seven categories \cite{miller1956magical}. Thus, the largest value of $L$ is around 7 in practice.
\end{remark}
\begin{remark}
	Note that the optimal weight assignment scheme is the same as in the case where the workers do not report confidence levels, i.e., $L=1$. Actually, the value of $L$ does not play any role in the weight assignment, as long as $\rho_{w,i}$ is not known exactly. Therefore, the weight assignment is universally optimal regardless of confidence reporting. 
\end{remark}
\subsection{Parameter Estimation}
Before the proposed aggregation rule can be used, $\mu$ has to be estimated to assign the weight for every worker's answers. Here, we employ three approaches to estimate $\mu$. We refer to our previous work \cite{7747496} for training-based and majority-voting based methods to estimate $\mu$, and give an additional method using the information extracted from the workers' reported confidence levels.
\subsubsection*{Confidence-based}Note that the reported confidence levels correspond to $\rho_{w,i}$. We collect all the values of the submitted confidence levels and obtain the estimate of $\mu$ from them. First, the $w$th worker's confidence level for the $i$th microtask is represented by $l_{w,i}$. Considering the fact that $\frac{ l_{w,i}-1}{L}\le \rho_{w,i} \le \frac{ l_{w,i}}{L}$ if the worker submits a definitive answer, we use $\frac{ l_{w,i}-\frac 1 2}{L}$ to approximate $\rho_{w,i}$. Let $l_{w,i}=\frac 1 2$ if the $w$th worker skips the $i$th microtask.  We obtain the estimate of $\mu$ by
\begin{align}
\hat \mu=\frac{1}{W-\epsilon}\sum\limits_{w=1}^{W}\sum\limits_{i=1}^N \frac{ l_{w,i}-\frac 1 2}{LI(w)},
\end{align}
where $I(w)$ denotes the number of definitive answers that $w$th worker submits.

\subsection{Performance Analysis}
In this section, we characterize the performance of the proposed crowdsourcing classification framework in terms of the probability of correct classification $P_c$. Note that we have overall correct classification only when all the bits are classified correctly.

\begin{proposition}\label{pro2}
	The probability of correct classification $P_c$ in the crowdsourcing system is
	\begin{align}
	P_c=\Big[\frac{1}{2}&+ \frac{1}{2}\sum\limits_S {\binom{W}{\mathbb{Q}}} \left( {F\left( \mathbb{Q} \right) - F^{\prime}\left( \mathbb{Q} \right)} \right)\nonumber \\
	&+ \frac{1}{4}\sum\limits_{S^\prime} {\binom{W}{\mathbb{Q}}} \left( {F\left( \mathbb{Q} \right) - F^{\prime} \left( \mathbb{Q} \right)} \right)\Big]^N,
	\end{align}
	
	where
	$
	\mathbb{Q}=\left\{({{q_{ - {N}}},{q_{ - {{N +1}}}}, \ldots {q_{{N}}}}):  \sum\limits_{n = -N}^N{{q_{{n}}}  = W}\right\}\nonumber
	$ with natural numbers $q_n$ and $q_0$, and
	$
	{S} = \left\{ {\mathbb{Q}: } {\sum\limits_{n = 1}^N {{{\mu}^{-n}}\left( {{q_{{n}}} - {q_{ - {n}}}} \right)}  > 0}\right\}\nonumber,
	$
	$
	S^\prime = \left\{ {\mathbb{Q}:\sum\limits_{n = 1}^N {{\mu ^{ - n}}\left( {{q_n} - {q_{ - n}}} \right)}  = 0} \right\}\nonumber,
	$ $\binom{W}{\mathbb{Q}} = \frac{{W!}}{{\prod\limits_{n =  - N}^N {{q_n}!} }}$, and
	\begin{align}
	F({\mathbb{Q}}) = {m^{{q_0}}}\prod\limits_{n = 1}^N {{{\left( {1 - \mu } \right)}^{{q_{ - n}}}}{\mu ^{{q_n}}}{{\left( {C_{N - 1}^{n - 1}{{\left( {1 - m} \right)}^n}{m^{N - n}}} \right)}^{{q_{ - n}} + {q_n}}}} \nonumber
	\end{align}
	\begin{align}
	F^{\prime}({\mathbb{Q}}) = {m^{{q_0}}}\prod\limits_{n = 1}^N {{{\left( {1 - \mu } \right)}^{{q_n}}}{\mu ^{{q_{ - n}}}}{{\left( {C_{N - 1}^{n - 1}{{\left( {1 - m} \right)}^n}{m^{N - n}}} \right)}^{{q_{ - n}} + {q_n}}}}\nonumber .
	\end{align}
\end{proposition}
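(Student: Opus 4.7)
The plan is to decompose the $M$-ary decision into $N$ bit-level binary decisions and compute a single-bit correct-decision probability $p_b$; by the independent microtask design and the observation that overall correct classification requires each of the $N$ bits to be correct, $P_c = p_b^{N}$. For a fixed bit $i$, the FC thresholds the weighted vote $T = \sum_w W_w v_{w,i}$, where $W_w = \mu^{-n_w}$ and $v_{w,i}\in\{+1,-1\}$ when the worker is definitive on bit $i$ (with the sign indicating correctness), and $v_{w,i}=0$ when the worker skips bit $i$. With equiprobable $H_0, H_1$ and a fair coin at ties, $p_b = P(T>0) + \frac{1}{2} P(T=0)$.

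Next I would catalog, for the fixed bit, the $W$ workers into one of $2N+1$ bins indexed by $n\in\{-N,\ldots,N\}$. The bin $q_0$ collects workers who skip bit $i$ (contributing $0$ to $T$), while $q_{\pm n}$ with $n\ge 1$ counts workers whose total number of definitive answers is $n$ with bit $i$ among them, who are correct ($+n$) or incorrect ($-n$) on bit $i$ and thus contribute $\pm\mu^{-n}$. The per-worker category probabilities are $m$, $\mu\binom{N-1}{n-1}(1-m)^n m^{N-n}$, and $(1-\mu)\binom{N-1}{n-1}(1-m)^n m^{N-n}$, where the binomial counts which of the other $N-1$ bits are also definitive. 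Over $W$ independent workers, $\mathbb{Q}$ is multinomial with $\Pr(\mathbb{Q}) = \binom{W}{\mathbb{Q}} F(\mathbb{Q})$; the bit-correct event is $\{\mathbb{Q}\in S\}$ and the tie event is $\{\mathbb{Q}\in S'\}$.

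The stated form of $p_b$ then follows from symmetry. Consider the involution $\mathbb{Q}\leftrightarrow\tilde{\mathbb{Q}}$ that swaps $q_n$ with $q_{-n}$: it reverses the sign of $T$, turns $F(\mathbb{Q})$ into $F'(\mathbb{Q})$, and preserves $\binom{W}{\mathbb{Q}}$. Hence $P(T<0) = \sum_{\mathbb{Q}\in S}\binom{W}{\mathbb{Q}}F'(\mathbb{Q})$, and combining with $P(T>0)+P(T=0)+P(T<0)=1$ yields $p_b = \frac{1}{2} + \frac{1}{2}\sum_{S}\binom{W}{\mathbb{Q}}(F-F')$. The additional term $\frac{1}{4}\sum_{S'}\binom{W}{\mathbb{Q}}(F-F')$ displayed in the proposition is consistent because the same involution restricts to a bijection of $S'$, so $\sum_{S'}F = \sum_{S'}F'$ and the added term is identically zero. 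Finally, $P_c = p_b^{N}$.

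The main obstacle I expect is the combinatorial bookkeeping that produces $F(\mathbb{Q})$: one must condition correctly on bit $i$ being among the $n$ definitive answers (yielding $\binom{N-1}{n-1}$ rather than $\binom{N}{n}$) and marginalize over the number of definitive answers for workers who skip bit $i$ (which collapses the $q_0$ probability to $m$, since such workers contribute $0$ regardless of $n_w$). Once the per-category probabilities are in place, the identification of $F$, $F'$, $S$, and $S'$ and the symmetry-based rearrangement are mechanical.
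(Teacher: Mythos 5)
Your proposal is correct and follows what is evidently the intended argument behind the proof the paper defers to its prior work: a per-bit decomposition of the weighted vote with $P_c=p_b^N$, a multinomial count of the $W$ workers over the $2N+1$ categories whose per-worker probabilities ($m$, $\mu\, C_{N-1}^{n-1}(1-m)^n m^{N-n}$, $(1-\mu)C_{N-1}^{n-1}(1-m)^n m^{N-n}$) are exactly the factors composing $F(\mathbb{Q})$ and $F'(\mathbb{Q})$, and randomized tie-breaking on $S'$. Your further observation that the involution $q_n\leftrightarrow q_{-n}$ fixes $S'$, swaps $F$ and $F'$, and hence makes the $\tfrac{1}{4}\sum_{S'}\binom{W}{\mathbb{Q}}(F-F')$ term vanish identically is sound and correctly reconciles your derived $p_b=\tfrac{1}{2}+\tfrac{1}{2}\sum_{S}\binom{W}{\mathbb{Q}}(F-F')$ with the stated formula.
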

\begin{proof}
	The proof is similar to the proof in our previous work \cite{7747496} and is, therefore, omitted for brevity.
\end{proof}

\section{Simulation Results}
In this section, we give the simulation results for the proposed crowdsourcing system. The workers take part in a classification task of $N=3$ microtasks. $F_P(p)$ is a uniform distribution denoted as $U(0,1)$.

First, we show the efficiency of the derived optimal weight assignment over the widely used simple majority voting method for crowdsourcing systems. The performance comparison is presented with the number of workers varying from 3 to 29. Here, we consider different qualities of the individual workers in the crowd which is represented by variable $\rho_{w,i}$ with a uniform distribution $U(0.6,1)$. Thus, the mean $\mu$ is 0.8, and we give simulation results when confidence reporting is not included and the estimation of $\mu$ is perfect in Fig. \ref{variousWorkerSize}. It is observed that a larger crowd completes the classification task with higher quality. A significant performance improvement by the proposed method with a reject option compared with the simple majority voting is shown in the figure.

\begin{figure}[h]
	\centering
	\includegraphics[width=3in]{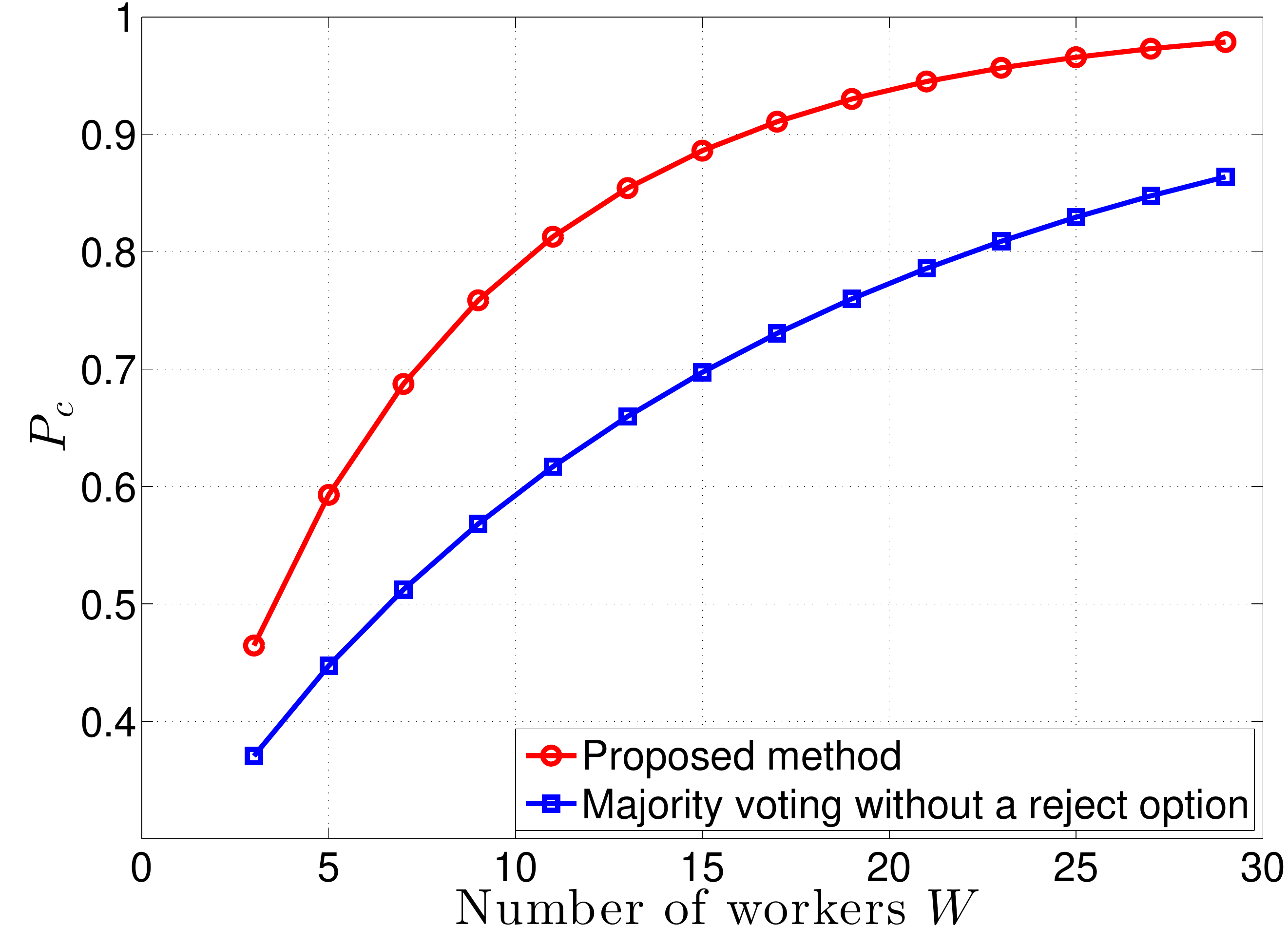} 
	\caption{Performance comparison with various crowd sizes.}
	\label{variousWorkerSize}
\end{figure}

Since an accurate estimation of $\mu$ is essential for applying the optimal weight assignment scheme, we next focus on the estimation results of $\mu$ for the three estimation methods as discussed in the previous section. Let $F_\rho(\rho)$ be a uniform distribution expressed as $U(x,1)$ with $0\le x \le 1$, and thus we can have $\mu$ varying from 0.5 to 1. We consider that $W=20$ workers participate in the classification task with a reject option and confidence reporting.
\begin{figure}[htp]
	\centering
	\begin{minipage}[t]{0.5\linewidth}
		\includegraphics[width=\linewidth]{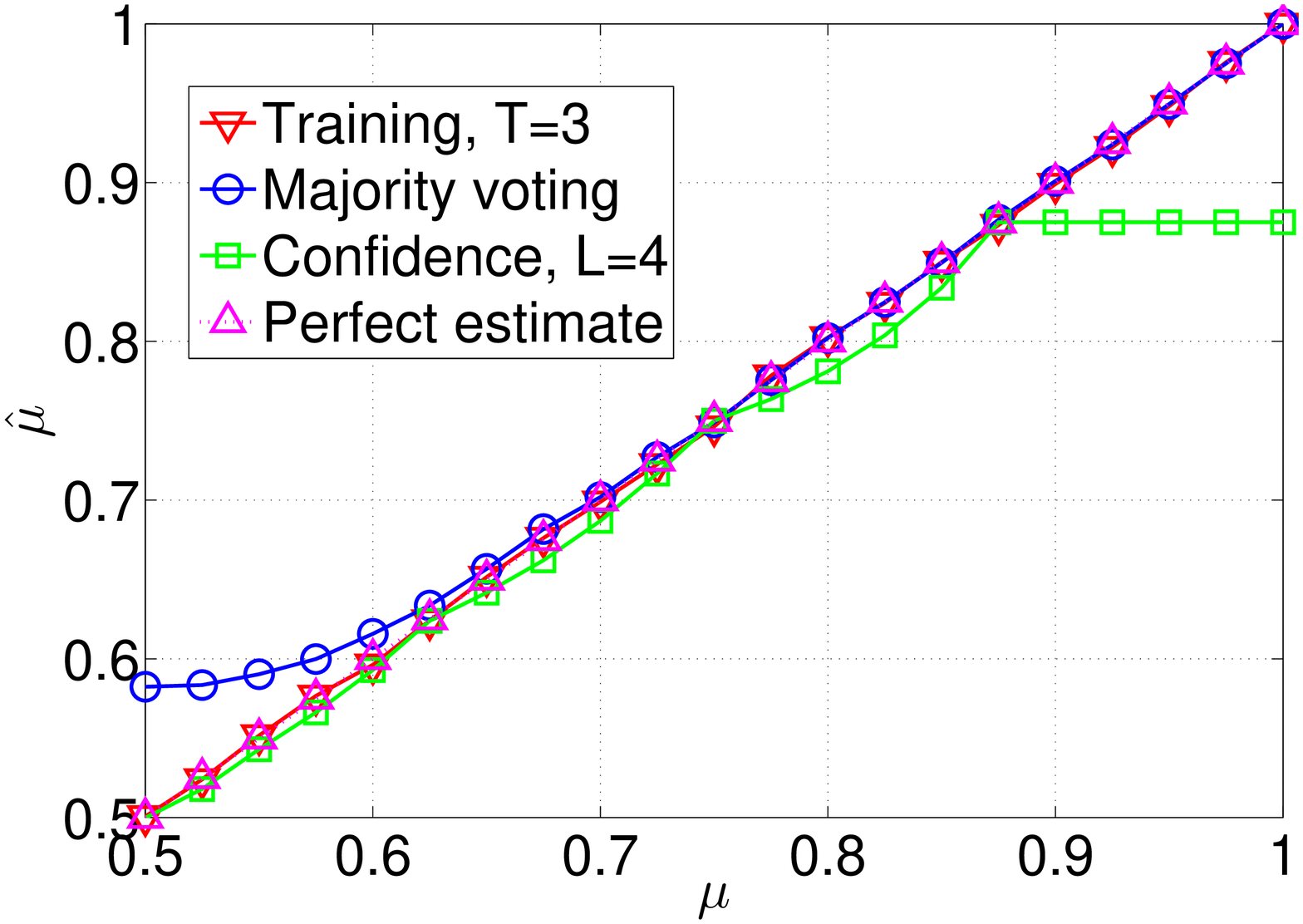}
		\subcaption{}
		\label{fig1}
	\end{minipage}%
	\begin{minipage}[t]{0.5\linewidth}
		\includegraphics[width=\linewidth]{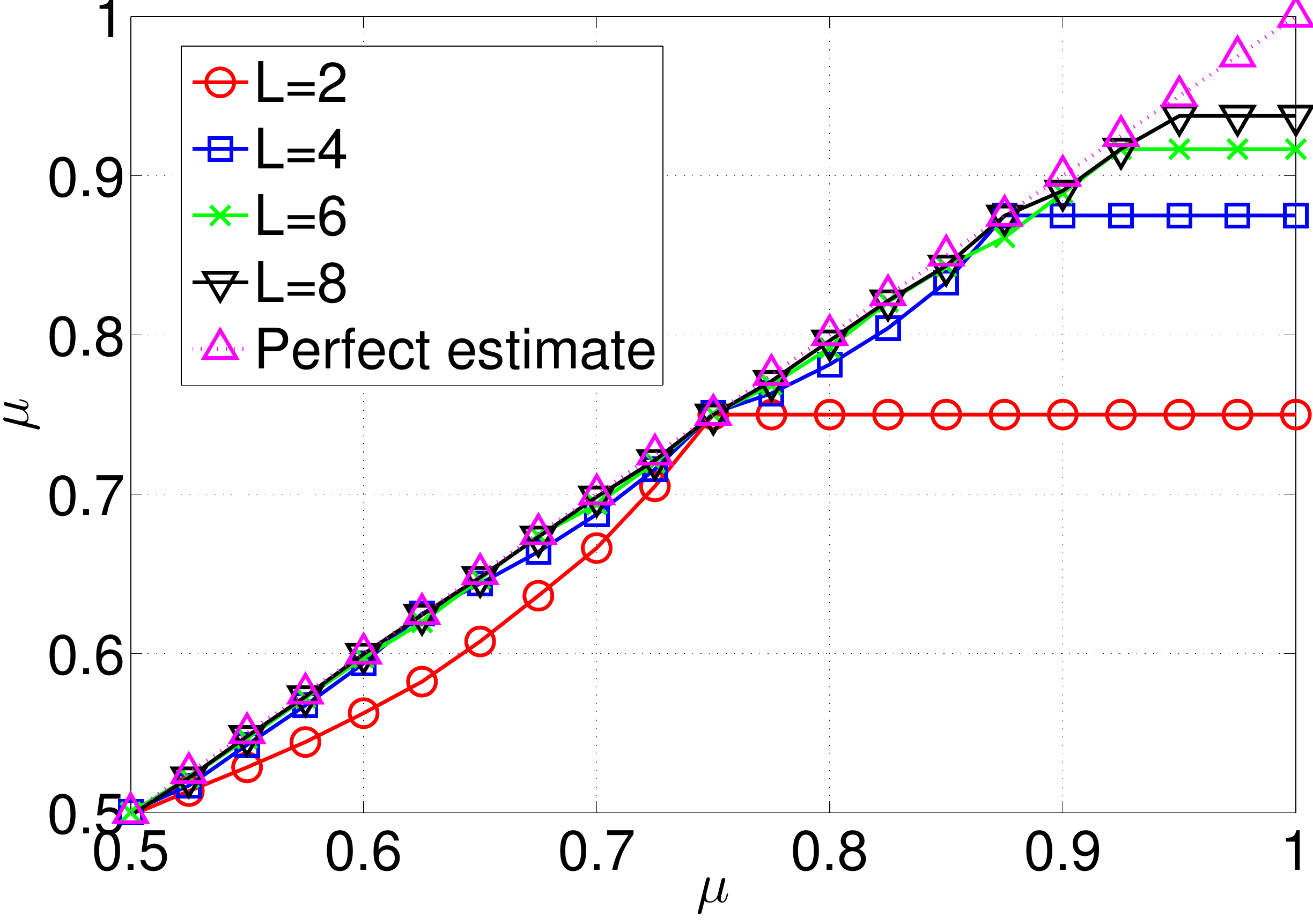}
		\subcaption{}
			\label{confidenceestimationcomparison}
	\end{minipage}
	\caption{ Estimation performance comparison. (a) Different methods. (b) Confidence-based method with different confidence levels.}
\end{figure}

In Fig. \ref{fig1}, it is observed that the training-based method has the best overall performance, which takes advantage of the gold standard questions. We can also see that the majority voting method has better performance as $\mu$ increases. This is because a larger $\mu$ means a better-quality crowd, which will lead to a more accurate result from majority voting, and consequently better estimation performance of $\mu$. When confidence is considered with $L=4$, we find that the overall estimation performance is not better than the other two methods because of quantization noise associated with confidence reporting in the estimation of $\mu$. It is also shown that the curve saturates and yields a fixed value of $\hat \mu=0.875$ when $\mu\ge 0.9$. This is because almost all the confidence levels submitted then are $l_{w,i}=4$ and the corresponding estimate result is exactly 0.875.

The estimation performance of the confidence-based method with multiple confidence levels is presented in Fig. \ref{confidenceestimationcomparison}. As is expected, a larger $L$ can help improve the estimation performance. However, it is seen that even though $L=8$, the corresponding performance is still not as good as that of the other two methods. Although we can expect estimation performance improvement as the maximum number of confidence levels $L$ increases, $L=8$ is pretty much the limit in practice due to the human inability to categorize beyond 7 levels. When the confidence-based estimation method is employed, the estimate value saturates at a certain fixed value when $\mu$ is large. Therefore, it can be concluded that the confidence-based estimation method does not provide good results.

\begin{figure}[h]
	\centering
	\includegraphics[width=3in]{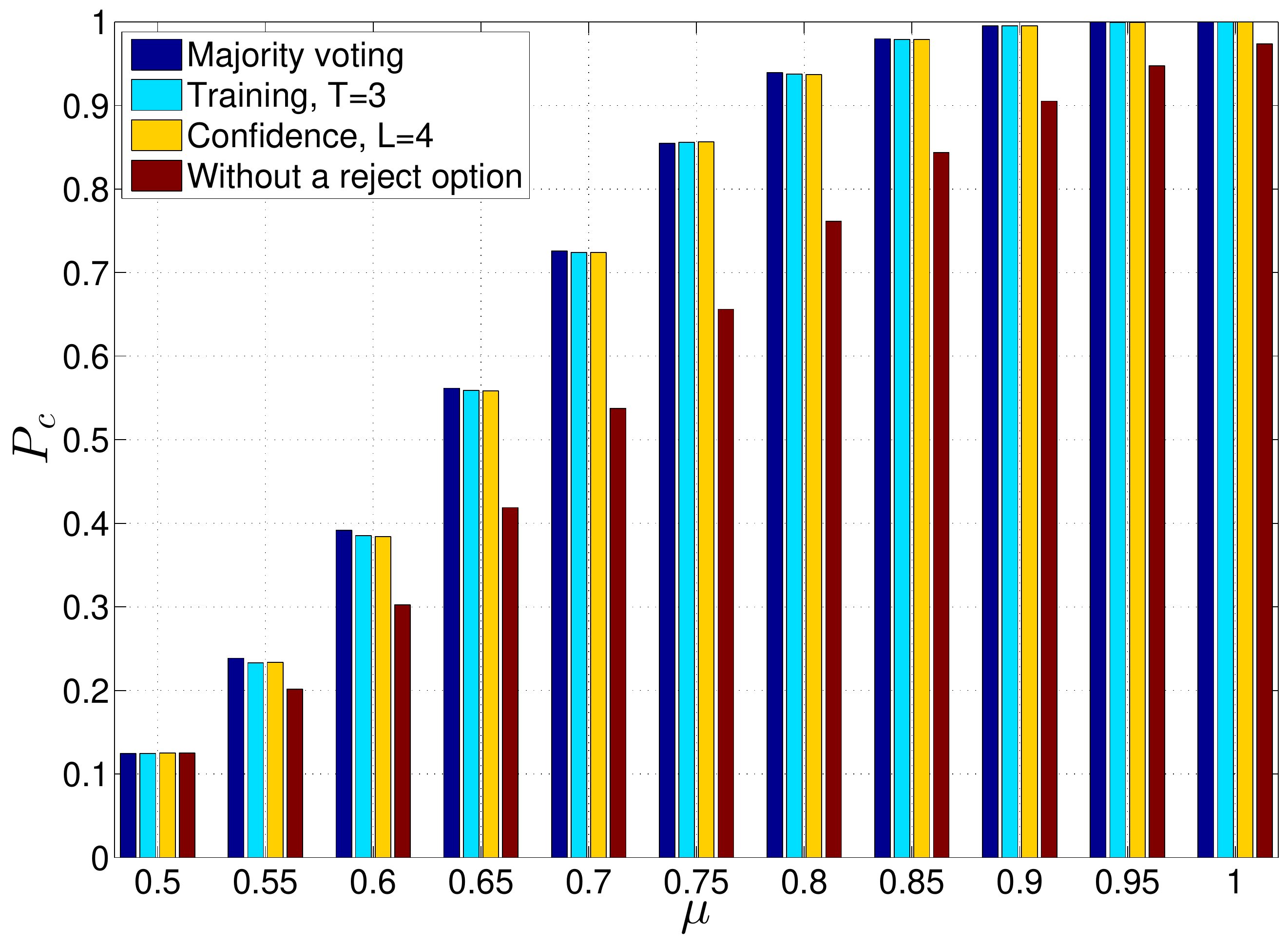} 
	\caption{Robustness of the proposed system and performance comparison with simple majority voting}
	\label{Pcwithestimate}
\end{figure}
Even though the three methods differ in performance in the estimation of $\mu$, we show in Fig. \ref{Pcwithestimate} the robustness of the proposed system. We observe from Fig. \ref{fig1} that the majority voting based method suffers from performance degradation in the low-$\mu$ regime, while the confidence based one suffers in the high-$\mu$ regime. However, when the value of $\mu$ is low, the workers are making random guesses even when they believe that they are able to respond with definitive answers. When the value of $\mu$ is large, almost all the definitive answers submitted are correct. Therefore, in those two situations, the performance degradation in the estimation of $\mu$ is negligible. From Fig. \ref{Pcwithestimate}, we see that system performance of the proposed system with estimation results from Fig. \ref{fig1} is almost the same as with the other three estimation methods, which significantly outperforms the system where simple majority voting is employed without a reject option. However, if a significant performance degradation in the estimation of $\mu$ occurs outside the two aforementioned regimes, overall classification performance loss is expected. For example, consider the case where $\mu$ is 0.8 while $\hat \mu$ is 0.5, and $N=5$, then $P_c=0.8$. However, the actual $P_c$ equals 0.89 when $\mu$ is estimated with an acceptable error.

\section{Conclusion}
We have studied a novel framework of crowdsourcing system for classification, where an individual worker has the reject option and can skip a microtask if he/she has no definitive answer, and gives definitive answers with quantized confidence. We presented an aggregation approach using a weighted majority voting rule, where each worker's response is assigned an optimized weight to maximize the crowd's classification performance. However, we showed that reporting of confidence by the crowd does not benefit classification performance. One is advised to adopt the reject option without confidence indication from the workers as it does not improve classification performance and may degrade performance in some cases.

\appendix
\section*{APPENDIX}
To solve problem \eqref{max}, we need $E_C[\mathbb{W}]$ and $E_O[\mathbb{W}]$. First, the $w$th worker can have weight contribution to $E_C[\mathbb{W}]$ only if all his/her definitive answers are correct. Thus, we have the average weight assigned to the correct element as
\begin{align}
&E_C[\mathbb{W}]\nonumber\\
& \!=\! E_{p,\rho, l}\!\left[\! {\sum\limits_{w = 1}^W\! {\sum\limits_{n = 0}^N\!\sum\limits_{N_n \in [N]}\!\prod \limits _{i\in N_n} \!\!\!{{W_w}(1\!-\!p_{w,i})\rho_{w,i}\!\!\!\!\!\!\!\prod \limits_{j\in [N]-N_n}\!\!\!\!\!\!\!\! p_{w,j}} } }\! |l_{w,i}\right]
\end{align}
where $[N]$ denotes $\{1,\ldots,N\}$ and $N_n\in [N]$ with cardinality $n$.	
Given a known $w$th worker, i.e., $p_{w,i}$ is known, we write
\begin{align}\label{5}
A_w(p_{w,i})= \sum\limits_{n = 0}^NE_{\rho,l}\left[{W_w}{\prod\limits_{i\in N_n}\rho_{w,i}} |l_{w,i} \right]P_{\lambda}(n),
\end{align}
where
$
P_{\lambda}(n)=\sum\limits_{N_n \in [N]}\prod \limits _{i\in N_n} {(1-p_{w,i})\prod \limits_{j\in [N]-N_n} p_{w,j}}.
$

Note that ${\sum\limits_{n = 0}^N {P_{\lambda}(n)}  }=1$, and then \eqref{5} is upper-bounded using Cauchy-Schwarz inequality as follows:
\begin{align}
&{A_w}(p_{w,i}) = \sum\limits_{n = 0}^N {E_{\rho,l} \left[ {{W_w}{\prod\limits_{i\in N_n}\rho_{w,i}}} |l_{w,i}\right]\sqrt {{P_\lambda }(n)} } \sqrt {{P_\lambda }(n)}  \nonumber \\ \label{8}
& \le \sqrt {\sum\limits_{n = 0}^N {{{{E_{\rho,l}^2}\left[ {{W_w}{\prod\limits_{i\in N_n}\rho_{w,i}}}|l_{w,i} \right]}}{P_{\lambda}(n)} } } \sqrt {\sum\limits_{n = 0}^N {P_{\lambda}(n)}  }.
\end{align}
Also note that equality holds in \eqref{8} only if 
\begin{align}
{{{E_{\rho,l}\left[ {{W_w}{\prod\limits_{i\in N_n}\rho_{w,i}}} |l_{w,i}\right]}\sqrt {P_{\lambda}(n)}  }}{{{}}} = \alpha_w(p_{w,i}) \sqrt {P_{\lambda}(n)} , 
\end{align}
where $\alpha_w$ is a positive quantity independent of $n$, which might be a function of $p_{w,i}$, and
\begin{align}\label{11}
{E_{\rho,l}\left[ {{W_w}{\prod\limits_{i\in N_n}\rho_{w,i}}} |l_{w,i}\right]}=\alpha_w(p_{w,i}).
\end{align}

Note that $\displaystyle{\int\limits_{p_{w,i}}\!\! {F_p ( {p_{w,i} = x} )dx}}=1$, and similarly we write
\begin{align}
& E_p[A_w(p_{w,i})]\le \int\limits_{p_{w,i}} {{\alpha _w(p_{w,i})}\Pr \left( {p_{w,i}= x} \right)dx}\nonumber\\
& \label{13} \le \sqrt {\int\limits_{p_{w,i}} {\alpha _w^2(p_{w,i})\Pr \left( {p_{w,i} = x} \right)dx} } \sqrt {\int\limits_{p_{w,i}} {\Pr \left( {p_{w,i} = x} \right)dx} }.
\end{align}
The equality \eqref{13} holds only if 
\begin{align}
\alpha_w(p_{w,i}) \sqrt {\Pr \left( {p_{w,i} = x}\right)}=\beta \sqrt {\Pr \left( {p_{w,i} = x}\right)},
\end{align}
WHERE $\beta$ is a positive constant independent of $p_{w,i}$, and we conclude that $\alpha_w$ is also a positive quantity independent of $p_{w,i}$.
Then from \eqref{11}, we have
$
{E_{\rho,l}\left[ {{W_w}{\prod\limits_{i\in N_n}\rho_{w,i}}}|l_{w,i} \right]}=\beta
$. 
Since ${\prod\limits_{i\in N_n}\rho_{w,i}}$ is the product of $n$ variables, its distribution is not known \textit{a priori}. A possible solution to weight assignment is a deterministic value given by $W_wE_{\rho,l}[{\prod\limits_{i\in N_n}\rho_{w,i}}|l_{w,i}]=\beta$ and, therefore, we can write the weight as 
\begin{align}
W_w=\frac{\beta}{E_{\rho,l}[{\prod\limits_{i\in N_n}\rho_{w,i}}|l_{w,i}]}=\frac{\beta}{\mu ^n}.
\end{align}

Then, we can express the crowd's average weight contribution to all the classes defined in \eqref{max} as
\begin{align}
{E_O}\left[ {{\mathbb{W}}} \right] &= \sum\limits_{w=1}^{W}E_{p,\rho,l}\left[ {\sum\limits_{n = 0}^N {\beta {\mu ^{ - n}}{2^{N - n}}{P_\lambda }\left( n \right)} } \right] \nonumber\\
&= \sum\limits_{w=1}^{W}\sum\limits_{n = 0}^N {\beta {\mu ^{ - n}}{2^{N - n}}\binom{N}{n}{{\left( {1 - m} \right)}^n}{m^{N - n}}}  \nonumber\\
&=W \beta {\left( {\frac{{1 - m}}{\mu } + 2m} \right)^N}=K.
\label{beta}
\end{align}
Thus, $\beta$ and the weight can be obtained accordingly.
Note that the weight derived above has a term that is common for every worker. Since the voting scheme is based on comparison, we can ignore this factor and have the normalized weight as $W_w={\mu}^{-n}$.

\begin{acks}
This work was supported in part by the Army Research Office under Grant W911NF-14-1-0339 and in part by the National Science Foundation under Grant ENG-1609916.
\end{acks}

\balance
\bibliographystyle{ACM-Reference-Format}

\bibliography{ref_Lqw} 

\end{document}